\newtheorem{thm}{Theorem}
\newtheorem{assumption}{Assumption}
\newtheorem{lemma}{Lemma}
\newtheorem{corollary}{Corollary}
\newtheorem{proposition}{Proposition}
\title{discrete-periodic ambiguity function of Random \\
Communication Signals}
\name{Ying Zhang$^{\star}$, Fan Liu$^{\dagger}$, Yifeng Xiong$^{\star \dagger}$, Tao Liu$^{\star}$, and Shi Jin$^{\dagger}$}
\address{
$^{\star}$Southern University of Science and Technology, Shenzhen, China\\
$^{\dagger}$Southeast University, Nanjing, China\\
$^{\star \dagger}$Beijing University of Posts and Telecommunications, Beijing, China
}
\begin{document}
\ninept
\maketitle
\begin{abstract}
This paper investigates the ambiguity function (AF) of communication signals carrying random data payloads, which is a fundamental metric characterizing sensing capability in ISAC systems. We first develop a unified analytical framework to evaluate the AF of communication-centric ISAC signals constructed from arbitrary orthonormal bases and independent identically distributed (i.i.d.) constellation symbols. Subsequently, we derive the discrete periodic ambiguity function (DP-AF) and provide closed-form expressions for its expected integrated sidelobe level (EISL) and average sidelobe level. Notably, we prove that the normalized EISL is invariant across all constellations and modulation bases. Finally, the theoretical findings are validated through simulations.
\end{abstract}
\begin{keywords}
Integrated Sensing and Communication (ISAC), ambiguity function (AF), sidelobe level
\end{keywords}

\section{Introduction}
Integrated sensing and communication (ISAC) has emerged as a key enabling technology for 6G networks, offering support to
a wide range of next-generation applications\cite{ITU2023, challengesfor6G2023, liufan2022integrated}. A central challenge in ISAC is the design of dual-functional waveforms that simultaneously facilitate reliable communication and precise target sensing. ISAC waveform design strategies are typically grouped into sensing-centric, communication-centric, and joint design approaches \cite{7855671, Liufan2018optimalwavedesign,9540344}, where communication-centric designs are particularly attractive for future 6G ISAC deployments due to their low implementation complexity \cite{Wei2023}. However, since the communication signals must be intrinsically \textit{random} to deliver useful information, this gives rise to critical implementation challenges in communication-centric ISAC systems\cite{LushihangNetwork, LushihangTSP, Xiong2023}. Therefore, it is essential to investigate the sensing capabilities of communication-centric waveforms that transmit random data payloads.


Originally introduced by Woodward in 1953 \cite{woodward1953probability}, the ambiguity function (AF) characterizes the performance of deterministic radar waveforms in the delay-Doppler domain and serves as a key metric for assessing radar resolution and multi-target detection. While the AF of deterministic radar signals has been extensively studied \cite{7114336,7233208}, enabling precise analysis of their autocorrelation and time-frequency features, communication signals present unique challenges due to their inherent randomness. The randomness affects the statistical behavior of the AF and significantly influences target detection and parameter estimation performance. Accordingly, a rigorous analysis of the AF characteristics of random communication signals is necessary to support practical ISAC applications.


In our previous studies \cite{liu2024ofdmachieveslowestranging,11037613}, we analyzed the autocorrelation function (ACF) of random communication signals. The results indicated that orthogonal frequency-division multiplexing (OFDM) outperforms other waveforms in minimizing ACF sidelobe levels, thereby improving ranging accuracy. However, these studies were limited to characterizing the zero-Doppler AF characteristics, and a comprehensive analysis of the sensing performance of random communication signals in the range-Doppler domain remains unexplored. To address this gap, we further investigate the AF properties of random communication signals over the delay-Doppler domain.

This work analyzes the discrete periodic AF (DP-AF) of communication waveforms modulated by information-bearing symbols. We consider a monostatic ISAC setup, where the Tx emits a communication-centric ISAC signal by modulating random symbols over an orthonormal basis. The signal serves communication users while being reflected by targets to a sensing Rx collocated with the Tx. Thus, the sensing Rx fully knows the ISAC signal despite its randomness. 
The main contributions of this work are as follows. First, we propose a generic framework for analyzing the sensing performance of communication-centric ISAC waveforms constructed from orthonormal signaling bases and random data symbols. Specifically, we examine the DP-AF of random ISAC signals, using the average sidelobe level and the expected integrated sidelobe level (EISL) as performance metrics. Second, we derive general closed-form expressions for both the average sidelobe level and the EISL of the DP-AF, under arbitrary constellations and modulation formats. In particular, we prove that the normalized EISL remains constant across all constellations and modulation schemes.

{\emph{Notations}}: Matrices are denoted by bold uppercase (e.g., $\mathbf{U}$), vectors by bold lowercase (e.g., $\mathbf{x}$), and scalars by normal font (e.g., $N$). The $n$th entry of $\mathbf{s}$ and $(m,n)$-th entry of $\mathbf{A}$ are $s_n$ and $a_{m,n}$. $\otimes$, $\odot$, and $\operatorname{vec}(\cdot)$ denote the Kronecker product, Hadamard product, and vectorization. $\left(\cdot\right)^T$, $\left(\cdot\right)^H$, and $\left(\cdot\right)^*$ denote transpose, Hermitian transpose, and complex conjugate. The entry-wise square of $\mathbf{X}$ is $|\mathbf{X}|^2=\mathbf{X}\odot\mathbf{X}^*$. The $\ell_p$ norm is $\left\| \cdot\right\|_p$, and $\mathbb{E}(\cdot)$ denotes expectation. $\mathrm{Diag}(\mathbf{a})$ is the diagonal matrix with the entries of $\mathbf{a}$ on its main diagonal. $\mathbf{1}_{N,N}$ and $\mathbf{1}_{N}$ are the $N\times N$ all-one matrix and $N$-dimensional all-one vector. $\langle \cdot \rangle_N$ denotes the modulo $N$ operation. The Kronecker delta $\delta_{m,n}$ is
\begin{equation}
    \setlength{\abovedisplayskip}{5pt} 
    \setlength{\belowdisplayskip}{5pt} 
    \delta_{m,n} = \left\{ 
    \begin{aligned}
        & 0, \quad m \ne n; \\
        & 1, \quad m=n.
    \end{aligned}
    \right.
\end{equation}

\section{System Model and performance metrics}

\subsection{Signal Model}
Let us represent $N$ communication symbols to be transmitted as $\mathbf{s} = \left[s_1, s_2, \dots, s_N \right]^\mathrm{T} \in \mathbb{C}^{N}$. In typical communication systems, we shall modulate the $N$ symbols over an orthonormal basis on the time domain, which can be defined as a unitary matrix $\mathbf{U} = [\mathbf{u}_1, \mathbf{u}_2,...,\mathbf{u}_N] \in \mathbb{U}(N)$, where $\mathbb{U}(N)$ denotes the unitary group of degree $N$. Consequently, the discrete time-domian signal can be expressed as
\begin{equation} \label{eq:1-D signal}
    \setlength{\abovedisplayskip}{3pt} 
    \setlength{\belowdisplayskip}{3pt} 
    \mathbf{x} = \mathbf{U} \mathbf{s} = \sum_{n=1}^N s_n \mathbf{u}_n.
\end{equation}
The above generic model may represent most of the communication signaling schemes. 
To proceed, let us further impose some generic constraints on the considered constellation symbols.
\begin{assumption}[Unit Power and Rotational Symmetry]
    We focus on constellations with a unit power, zero mean, and zero pseudo-variance, defined as
    \begin{equation}
        \setlength{\abovedisplayskip}{4pt}
        \setlength{\belowdisplayskip}{4pt}
        \mathbb{E}( \left|s_n \right|^2) = 1, \quad\mathbb{E}(s_n) = 0,\quad\mathbb{E}(s_n^2) = 0,\quad \forall n.
    \end{equation}
\end{assumption}

\noindent The unit-power normalization in Assumption 1 ensures fair comparisons of sensing and communication performance under varying constellation formats. Moreover, most practical constellations, such as the PSK and QAM families, satisfy the zero-mean and zero pseudo-variance properties, whereas BPSK and 8-QAM do not satisfy the latter.

We now define the \textit{kurtosis} of a constellation, which serves as a key component in shaping the AF of communication signals, and is given by
\begin{equation}
    \mu_4 = \frac{\mathbb{E}\left\{|s_n-\mathbb{E}(s_n)|^4\right\}}{\mathbb{E}\left\{|s_n-\mathbb{E}(s_n)|^2\right\}^2}.
\end{equation}
Under unit power and zero mean conditions, kurtosis reduces to the fourth-order moment $\mu_4= \mathbb{E}(|s_n|^4)$. Using the power mean inequality, immediately yields
\begin{equation}
    \setlength{\abovedisplayskip}{4pt}
    \setlength{\belowdisplayskip}{4pt}
    \mathbb{E} \left( \left|s_n \right|^4 \right) \geq \left[ \mathbb{E}\left( \left|s_n\right|^2\right) \right]^2 = 1.
\end{equation}
Specifically, all PSK constellations exhibit $\mu_4 = 1$, while all QAM constellations have $1 \leq \mu_4 \leq 2$.

\subsection{Discrete Periodic Ambiguity Function (DP-AF)}
Let us consider the scenario where $P$ targets located at different ranges and with different moving speeds need to be sensed simultaneously, each of which is characterized by a complex amplitude $\beta_p$, a delay $\tau_p$, and a Doppler $\nu_p$. The echo signal at the sensing Rx is
\begin{equation}
    \setlength{\abovedisplayskip}{4pt}
    \setlength{\belowdisplayskip}{4pt}
    y(t) = \sum_{p=1}^P \beta_p x(t-\tau_p)e^{-j2 \pi \nu_p t} + z(t),
\end{equation}
where $x(t)$ is the radar signal in the continuous time domain and $z(t)$ is the added white Gaussian noise.

To extract the delay and Doppler parameters, a common practice is to matched-filter (MF) the echo signal $y(t)$ with the transmitted signal $x(t)$, yielding the following MF output in the continuous time domain
\begin{align}\label{MF output in the continuous time domain}
    \mathcal{MF}(\tau, \nu) 
    &\nonumber= \sum_{p=1}^P \int_{-\infty}^{\infty}  \beta_p x(t) x^* (t - (\tau-\tau_p)) e^{j 2\pi (\nu-\nu_p) t}\mathrm{d}t\\
    & + \int_{-\infty}^{\infty}z(t)  x^* (t - \tau) e^{j 2\pi \nu t}\mathrm{d}t. 
\end{align}    

Let us recall the Woodward's AF for the continuous-time signal $x(t)$, which is
\begin{equation} \label{AF output}
    \setlength{\abovedisplayskip}{4pt}
    \setlength{\belowdisplayskip}{4pt}
    \mathcal{X}(\tau,\nu) = \int_{-\infty}^{\infty} x(t) x^*(t-\tau) e^{j 2 \pi \nu t} dt.
\end{equation}
Combining \eqref{MF output in the continuous time domain} and \eqref{AF output}, one immediately observes that the MF output can be viewed as a linear combination of time- and frequency-shifted AFs plus noise. Toward that end, it is desired that the squared output $\left| \mathcal{MF}(\tau, \nu) \right|^2$ generates high peaks at $\tau=\tau_{p}$ and $\nu=\nu_p$, and small sidelobe levels elsewhere, which depend heavily on the overall geometry of the AF.

This paper mainly focuses on the case of the ISAC waveform with CP, where the length of the added CP is larger than the maximum delay of $Q$ targets. Accordingly, the DP-AF of the discrete-time signal $\mathbf{x}$ is
\begin{equation}
    \setlength{\abovedisplayskip}{4pt}
    \setlength{\belowdisplayskip}{4pt}
    \mathcal{X}(k,q) 
    = \sqrt{N} \mathbf{x}^H \mathbf{J}_k^H \mathrm{Diag}(\mathbf{f}_{q+1}^{*}) \mathbf{x}, 
\end{equation}
where $k,q = 0,1,2,..., N-1$, with $\mathbf{J}_k$ being the $k$th periodic shift matrix due to the addition of the CP, given as
\begin{equation}
    \setlength{\abovedisplayskip}{4pt}
    \setlength{\belowdisplayskip}{4pt}
    \mathbf{J}_k = \left[ \begin{matrix}
        \mathbf{0} & \mathbf{I}_k \\
        \mathbf{I}_{N-k} & \mathbf{0}
    \end{matrix}\right].
\end{equation}
The sidelobe level of $\mathcal{X}(k,q)$ is defined as
\begin{equation}\label{2D AF}
    \setlength{\abovedisplayskip}{4pt}
    \setlength{\belowdisplayskip}{4pt}
\begin{aligned}
    |\mathcal{X}(k,q)|^2 
    & = |\sqrt{N} \mathbf{x}^H \mathrm{Diag} (\mathbf{f}_{q+1}) \mathbf{J}_k \mathbf{x}|^2, \quad \forall k, q \ne 0.
\end{aligned}
\end{equation}
The mainlobe of DP-AF is $|\mathcal{X}(0,0)|^2 = |\mathbf{x}^H \mathbf{x}| = \|\mathbf{x}\|_2^4$. 
Due to the random nature of the signal, the DP-AF is a random function. Therefore, we define the average of the sidelobe level as a performance metric, which is expressed as
\begin{equation}
    \setlength{\abovedisplayskip}{4pt}
    \setlength{\belowdisplayskip}{4pt}
    \mathbb{E}(\left| \mathcal{X}(k,q) \right|^2) = \mathbb{E}(|\sqrt{N} \mathbf{x}^H \mathrm{Diag} (\mathbf{f}_{q+1}) \mathbf{J}_k \mathbf{x}|^2), \quad \forall k,q \ne 0,
\end{equation}
where the expectation is with respect to the random symbol vector $\mathbf{s}$. Accordingly, the average mainlobe is defined as \cite{liu2024ofdmachieveslowestranging} 
    \begin{equation}\label{the average mainlobe of AF}
        \setlength{\abovedisplayskip}{4pt}
        \setlength{\belowdisplayskip}{4pt}
        \mathbb{E}(\left| \mathcal{X}(0,0) \right|^2) = \mathbb{E}(\|\mathbf{x}\|_2^4) =N^2 + (\mu_4 - 1)N,
    \end{equation}
and the EISL is given by
\begin{equation}
    \setlength{\abovedisplayskip}{4pt}
    \setlength{\belowdisplayskip}{4pt}
\begin{aligned}
    \mathrm{EISL} 
    & =\sum_{k=0}^{N-1} \sum_{q=0}^{N-1} \mathbb{E}(|\mathcal{X}(k,q)|^2) - \mathbb{E}(|\mathcal{X}(0,0)|^2).
\end{aligned}
\end{equation}

\section{ Statistical Characterization of DP-AF }
\subsection{Average Sidelobe Level}
\begin{lemma}
The periodic shift matrix can be decomposed as
\begin{equation}
    \setlength{\abovedisplayskip}{5pt}
    \setlength{\belowdisplayskip}{5pt}
    \mathbf{J}_k = \sqrt{N} \mathbf{F}_N^H \mathrm{Diag}(\mathbf{f}_{k+1}) \mathbf{F}_N,
\end{equation}
where $\mathbf{F}_N$ is the normalized discrete Fourier transform (DFT) matrix of size $N$ and $\mathbf{f}_k$ is the $k$-th column of $\mathbf{F}_N$.
\end{lemma}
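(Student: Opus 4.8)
The plan is to prove the identity by verifying that the two matrices agree entry by entry, which reduces the claim to evaluating a single geometric sum. First I would fix the (one-based) convention for the normalized DFT matrix, writing $(\mathbf{F}_N)_{\ell,n} = \tfrac{1}{\sqrt{N}} e^{-j 2\pi (\ell-1)(n-1)/N}$, so that the diagonal factor has entries $(\mathbf{f}_{k+1})_\ell = \tfrac{1}{\sqrt{N}} e^{-j 2\pi (\ell-1)k/N}$. With this notation, the $(m,n)$ entry of the right-hand side becomes
\begin{equation}
\sqrt{N} \sum_{\ell=1}^N (\mathbf{F}_N^H)_{m,\ell} (\mathbf{f}_{k+1})_\ell (\mathbf{F}_N)_{\ell,n} = \frac{1}{N} \sum_{\ell=1}^N e^{j 2\pi (\ell-1)(m - n - k)/N}.
\end{equation}

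Second, I would invoke the orthogonality of the DFT characters: this geometric sum equals $N$ when $m - n - k \equiv 0 \pmod N$ and vanishes otherwise. Hence the $(m,n)$ entry of $\sqrt{N} \mathbf{F}_N^H \mathrm{Diag}(\mathbf{f}_{k+1}) \mathbf{F}_N$ equals $1$ exactly when $m \equiv n + k \pmod N$ (equivalently $n = \langle m - k - 1 \rangle_N + 1$) and $0$ otherwise. Third, I would match this support pattern against the block form of $\mathbf{J}_k$: reading off its definition, $(\mathbf{J}_k \mathbf{x})_m = x_{\langle m - k - 1 \rangle_N + 1}$, so $(\mathbf{J}_k)_{m,n} = 1$ precisely when $m \equiv n + k \pmod N$, which coincides with the right-hand side and completes the argument.

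An equivalent and perhaps more transparent route is spectral. Let $\mathbf{g}_r$ denote the $r$-th column of $\mathbf{F}_N^H$, so $(\mathbf{g}_r)_m = \tfrac{1}{\sqrt{N}} e^{j 2\pi (r-1)(m-1)/N}$. Applying $\mathbf{J}_k$ cyclically shifts the index, and the periodicity of the exponential turns this shift into a scalar phase, yielding $\mathbf{J}_k \mathbf{g}_r = e^{-j 2\pi (r-1) k/N} \mathbf{g}_r$. Since $e^{-j 2\pi (r-1)k/N} = \sqrt{N}\,(\mathbf{f}_{k+1})_r$, the columns of $\mathbf{F}_N^H$ diagonalize $\mathbf{J}_k$, and using unitarity $\mathbf{F}_N \mathbf{F}_N^H = \mathbf{I}_N$ gives $\mathbf{J}_k = \mathbf{F}_N^H \mathrm{Diag}(\sqrt{N}\,\mathbf{f}_{k+1}) \mathbf{F}_N$ directly.

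The only real obstacle here is bookkeeping rather than mathematics: I must track the one-based indexing, the $+1$ offset appearing in $\mathbf{f}_{k+1}$, the direction of the circular shift induced by the block layout of $\mathbf{J}_k$, and the sign convention in the DFT kernel consistently throughout. Any mismatch among these would flip the shift direction or misplace the diagonal phase, so I would verify the indexing on the boundary cases $m=1$ and $m=k+1$ to pin down the correct orientation before trusting the general formula.
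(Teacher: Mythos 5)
Your proof is correct. Note that the paper itself offers no proof of this lemma: it is stated as a known fact, namely the classical result that cyclic shift (circulant) matrices are diagonalized by the DFT, so your argument fills in what the paper leaves implicit rather than paralleling an existing derivation. Both of your routes are valid and consistent with the paper's conventions: the entrywise computation correctly reduces the $(m,n)$ entry of $\sqrt{N}\,\mathbf{F}_N^H \mathrm{Diag}(\mathbf{f}_{k+1})\mathbf{F}_N$ to the character sum $\frac{1}{N}\sum_{\ell=1}^N e^{j2\pi(\ell-1)(m-n-k)/N}$, whose support $m \equiv n+k \pmod{N}$ matches the block definition of $\mathbf{J}_k$ (rows $1,\dots,k$ select columns $N-k+1,\dots,N$, and rows $k+1,\dots,N$ select columns $1,\dots,N-k$, i.e.\ $(\mathbf{J}_k)_{m,n}=1$ iff $m \equiv n+k \pmod{N}$); and the spectral argument $\mathbf{J}_k \mathbf{g}_r = e^{-j2\pi(r-1)k/N}\mathbf{g}_r = \sqrt{N}\,(\mathbf{f}_{k+1})_r\,\mathbf{g}_r$ together with unitarity of $\mathbf{F}_N$ gives the factorization in one line. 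The entrywise version is fully self-contained, while the spectral version is the more illuminating one for the rest of the paper, since it exhibits $\mathbf{f}_{k+1}$ (up to the $\sqrt{N}$ scaling) as the eigenvalue vector of the shift---which is exactly the structure exploited later when the authors commute $\mathbf{J}_k$ past DFT factors in the sidelobe derivations. One small bonus you could have observed: the identity is insensitive to the sign convention of the DFT kernel, because flipping the sign conjugates $\mathbf{f}_{k+1}$ and $\mathbf{F}_N$ simultaneously, leaving the support condition $m \equiv n+k \pmod{N}$ unchanged; so the boundary-case checks you propose, while prudent, cannot actually reveal a convention mismatch here.
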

Using Lemma 1, \eqref{2D AF} can be rewritten as
\begin{equation}\label{2D AF -2}
    \setlength{\abovedisplayskip}{5pt}
    \setlength{\belowdisplayskip}{5pt}
    \ \left| \mathcal{X}(k,q) \right|^2 \
    = \left| \sqrt{N} (\mathbf{J}_{q} \mathbf{F}_N \mathbf{Us})^H \mathrm{Diag}(\mathbf{f}_{k+1}) (\mathbf{F}_N \mathbf{Us}) \right|^2.
\end{equation}

\begin{proposition}
The closed-form expression of average squared DP-AF is
\begin{align}\label{average sidelobe of AF}
    &\nonumber \mathbb{E}(\left| \mathcal{X}(k,q) \right|^2)
    =  \sum_{n=1}^N \sum_{m=1}^N \delta_{m,n} e^{-j\frac{2\pi}{N}k(n-m)} + (\mu_4 - 2) \times\\
    &\nonumber \sum_{n=1}^N \sum_{m=1}^N \mathbf{1}^T(\mathbf{v}_{\langle n-q\rangle_N} \odot \mathbf{v}_n^{*} \odot \mathbf{v}_{\langle m-q\rangle_N}^{*} \odot \mathbf{v}_m) e^{-j\frac{2\pi}{N}k(n-m)}  \\
    &+ \sum_{n=1}^N \sum_{m=1}^N   \delta_{q,0} e^{-j\frac{2\pi}{N}k(n-m)}  ,  
\end{align}
where $k,q = 0,1,2,..,N-1$ and $\mathbf{v}_n$ is the $n$-th column of $\mathbf{V} = \mathbf{U}^H \mathbf{F}_N^H$.

\begin{proof}
    We first rewrite \eqref{2D AF -2} as
\begin{equation}\label{the squared AF}
    \setlength{\abovedisplayskip}{4pt}
    \setlength{\belowdisplayskip}{4pt}
\begin{aligned}
    &\left| \mathcal{X}(k,q) \right|^2
    = \left| \sum_{n=1}^N \mathbf{s}^H \mathbf{v}_{\langle n-q\rangle_N} \mathbf{v}_n^H \mathbf{s}  e^{-j \frac{2\pi}{N} k (n-1)} \right|^2 \\
     &= \sum_{n=1}^N \sum_{m=1}^N \mathbf{v}_n^H \mathbf{s} \mathbf{s}^H \mathbf{v}_{\langle n-q\rangle_N}  \mathbf{v}_{\langle m-q\rangle_N}^H \mathbf{s} \mathbf{s}^H \mathbf{v}_m^H  e^{-j \frac{2\pi}{N} k (n-m)}.
\end{aligned}
\end{equation}
Define $\widetilde{\mathbf{s}} = \mathrm{vec}(\mathbf{s} \mathbf{s}^H) = \mathbf{s}^* \otimes \mathbf{s}$. Expanding $\mathbf{v}_n^H \mathbf{s} \mathbf{s}^H \mathbf{v}_{\langle n-q\rangle _N}$ yields
\begin{equation}\label{expanding 1}
    \setlength{\abovedisplayskip}{4pt}
    \setlength{\belowdisplayskip}{4pt}
    \mathbf{v}_n^H \mathbf{s} \mathbf{s}^H \mathbf{v}_{\langle n-q\rangle_N} 
    = (\mathbf{v}_{\langle n-q\rangle_N}^{T} \otimes \mathbf{v}_n^H) \widetilde{\mathbf{s}}. 
\end{equation}
Similarly, $\mathbf{v}_{\langle m-q\rangle _N}^H \mathbf{s} \mathbf{s}^H \mathbf{v}_m^H$ can be expanded as
\begin{equation}\label{expanding 2}
    \setlength{\abovedisplayskip}{4pt}
    \setlength{\belowdisplayskip}{4pt}
\begin{aligned}
    \mathbf{v}_{\langle m-q\rangle _N}^H \mathbf{s} \mathbf{s}^H \mathbf{v}_m^H 
    = \widetilde{\mathbf{s}}^H (\mathbf{v}_{\langle m-q\rangle _N}^{*} \otimes \mathbf{v}_m).   
\end{aligned}
\end{equation}
Plugging \eqref{expanding 1} and \eqref{expanding 2} into \eqref{the squared AF} immediately leads to
\begin{equation}
    \setlength{\abovedisplayskip}{4pt}
    \setlength{\belowdisplayskip}{4pt}
\begin{aligned}
    &\left| \mathcal{X}(k,q) \right|^2 \\
    &= 
    \sum_{n=1}^N \sum_{m=1}^N (\mathbf{v}_{\langle n-q \rangle_N}^{T} \otimes \mathbf{v}_n^H) \widetilde{\mathbf{s}} \widetilde{\mathbf{s}}^H (\mathbf{v}_{\langle m-q \rangle_N}^{*} \otimes \mathbf{v}_m)  e^{-j \frac{2\pi}{N} k (n-m)} .   
\end{aligned}
\end{equation}
Therefore, the average sidelobe is
\begin{equation}
    \setlength{\abovedisplayskip}{4pt}
    \setlength{\belowdisplayskip}{4pt}
\begin{aligned}
    &\mathbb{E}(\left| \mathcal{X}(k,q) \right|^2) = \\
    &\sum_{n=1}^N \sum_{m=1}^N (\mathbf{v}_{\langle n-q \rangle_N}^{T} \otimes \mathbf{v}_n^H) \mathbf{S} (\mathbf{v}_{\langle m-q \rangle_N}^{*} \otimes \mathbf{v}_m)  e^{-j \frac{2\pi}{N} k (n-m)}.
\end{aligned}
\end{equation}
where $\mathbf{S} = \mathbb{E}(\widetilde{\mathbf{s}} \widetilde{\mathbf{s}}^H)$, and can be decomposed as \cite{liu2024ofdmachieveslowestranging}
    \begin{equation} \label{S}
        \setlength{\abovedisplayskip}{4pt}
        \setlength{\belowdisplayskip}{4pt}
        \mathbf{S} = \mathbf{I}_{N^2} + \mathbf{S}_1 + \mathbf{S}_2 ,
    \end{equation}
where 
    \begin{align}
        \mathbf{S}_1 &= \mathrm{Diag} ( [\mu_4 - 2, \mathbf{0}_N^T, \mu_4 - 2, \mathbf{0}_N^T,..., \mu_4 - 2 ]^T ), \\
        \mathbf{S}_2 &= \left[\mathbf{g}, \mathbf{0}_{N^2 \times N},\mathbf{g},..., \mathbf{g}, \mathbf{0}_{N^2 \times N}, \mathbf{g} \right],
    \end{align}    
with $\mathbf{0}_{N^2\times N}$ being the all-zero matrix of size $N ^2 \times N$, and 
    \begin{equation}
        \setlength{\abovedisplayskip}{4pt}
        \setlength{\belowdisplayskip}{4pt}
        \mathbf{g} = \left[ 1,\mathbf{0}_N^T,1,...,1,\mathbf{0}_N^T,1 \right]^T.
    \end{equation}
Due to the fact that $\mathbf{v}_n^H \mathbf{v}_m = \delta_{m,n}$, we have
\begin{align} 
    &(\mathbf{v}_{\langle n-q \rangle_N}^{T} \otimes \mathbf{v}_n^H) \mathbf{I}_{N^2} (\mathbf{v}_{\langle m-q \rangle_N}^{*} \otimes \mathbf{v}_m) 
    = \delta_{m,n},  \label{term1} \\
    \nonumber&(\mathbf{v}_{\langle n-q \rangle _N}^{T} \otimes \mathbf{v}_n^H) \mathbf{S}_1 (\mathbf{v}_{\langle m-q \rangle _N}^{*} \otimes \mathbf{v}_m) \\
    &= (\mu_4 - 2)  \mathbf{1}^T (\mathbf{v}_{\langle n-q \rangle _N} \odot \mathbf{v}_n^{*} \odot \mathbf{v}_{\langle m-q \rangle _N}^{*} \odot \mathbf{v}_m) ,\label{term2} \\   
    \nonumber&(\mathbf{v}_{\langle n-q \rangle _N}^{T} \otimes \mathbf{v}_n^H) \mathbf{S}_2 (\mathbf{v}_{\langle m-q \rangle _N}^{*} \otimes \mathbf{v}_m) \\
    &= \mathbf{1}^T (\mathbf{v}_{\langle n-q \rangle _N} \odot \mathbf{v}_n^{*}) \cdot \mathbf{1}^T (\mathbf{v}_{\langle m-q \rangle _N}^{*} \odot \mathbf{v}_m)
    = \delta_{q,0} , \label{term3}   
\end{align}
Combining \eqref{term1} - \eqref{term3} yields \eqref{average sidelobe of AF}.
\end{proof}
\end{proposition}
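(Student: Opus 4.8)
The plan is to reduce the quartic-in-$\mathbf{s}$ quantity $\mathbb{E}(|\mathcal{X}(k,q)|^2)$ to a single quadratic form in the lifted variable $\widetilde{\mathbf{s}} = \operatorname{vec}(\mathbf{s}\mathbf{s}^H)$, so that the randomness is entirely absorbed into the fourth-moment matrix $\mathbf{S} = \mathbb{E}(\widetilde{\mathbf{s}}\widetilde{\mathbf{s}}^H)$, whose explicit structure for i.i.d.\ symbols satisfying Assumption~1 is available from prior work. The orthonormality of the columns of $\mathbf{V} = \mathbf{U}^H\mathbf{F}_N^H$ (a product of unitaries, hence $\mathbf{v}_n^H\mathbf{v}_m = \delta_{m,n}$) then does all the remaining collapsing.

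First I would unfold the Lemma-1 rewrite \eqref{2D AF -2}. Writing $\mathbf{F}_N\mathbf{U}\mathbf{s} = \mathbf{V}^H\mathbf{s}$, its $n$-th entry is exactly $\mathbf{v}_n^H\mathbf{s}$, the phase diagonal $\mathrm{Diag}(\mathbf{f}_{k+1})$ supplies the factor $e^{-j\frac{2\pi}{N}k(n-1)}$, and the periodic shift $\mathbf{J}_q$ relabels index $n$ to $\langle n-q\rangle_N$. This converts the bracketed scalar into $\sum_{n}\mathbf{s}^H\mathbf{v}_{\langle n-q\rangle_N}\mathbf{v}_n^H\mathbf{s}\,e^{-j\frac{2\pi}{N}k(n-1)}$; expanding the modulus squared as a double sum over $n,m$ produces a product of two Hermitian bilinear forms in $\mathbf{s}$, with the two phases combining into $e^{-j\frac{2\pi}{N}k(n-m)}$.

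The key algebraic step is the vectorization identity $\mathbf{a}^H\mathbf{s}\mathbf{s}^H\mathbf{b} = (\mathbf{b}^T\otimes\mathbf{a}^H)\widetilde{\mathbf{s}}$ together with its conjugate-transpose counterpart, which I would apply to each bilinear factor to obtain the quadratic form $(\mathbf{v}_{\langle n-q\rangle_N}^T\otimes\mathbf{v}_n^H)\,\widetilde{\mathbf{s}}\widetilde{\mathbf{s}}^H\,(\mathbf{v}_{\langle m-q\rangle_N}^*\otimes\mathbf{v}_m)$ in the summand. Expectation commutes with the deterministic Kronecker factors and replaces $\widetilde{\mathbf{s}}\widetilde{\mathbf{s}}^H$ by $\mathbf{S}$. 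I would then invoke the decomposition $\mathbf{S} = \mathbf{I}_{N^2} + \mathbf{S}_1 + \mathbf{S}_2$ and treat the three contributions separately. The $\mathbf{I}_{N^2}$ term reduces, via the Kronecker mixed-product rule and $\mathbf{v}_n^H\mathbf{v}_m=\delta_{m,n}$, to the bare $\delta_{m,n}$; the diagonal, kurtosis-weighted $\mathbf{S}_1$ term yields the $(\mu_4-2)$ factor multiplying the fourfold Hadamard contraction $\mathbf{1}^T(\mathbf{v}_{\langle n-q\rangle_N}\odot\mathbf{v}_n^*\odot\mathbf{v}_{\langle m-q\rangle_N}^*\odot\mathbf{v}_m)$.

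The step I expect to be the main obstacle is the $\mathbf{S}_2$ contribution, because its $q$-dependence collapses in a non-obvious way. The point is to recognize that the block pattern of $\mathbf{S}_2$ is the rank-one outer product $\mathbf{g}\mathbf{g}^T$ with $\mathbf{g}=\operatorname{vec}(\mathbf{I}_N)$, so that $(\mathbf{v}_{\langle n-q\rangle_N}^T\otimes\mathbf{v}_n^H)\mathbf{g} = \mathbf{1}^T(\mathbf{v}_{\langle n-q\rangle_N}\odot\mathbf{v}_n^*) = \mathbf{v}_n^H\mathbf{v}_{\langle n-q\rangle_N} = \delta_{n,\langle n-q\rangle_N}=\delta_{q,0}$, and likewise for the conjugate factor; hence the whole $\mathbf{S}_2$ term contributes only at zero Doppler $q=0$. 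Reintroducing the phase $e^{-j\frac{2\pi}{N}k(n-m)}$ and summing the three evaluated terms over $n,m$ yields \eqref{average sidelobe of AF}. The only facts I rely on beyond routine manipulation are the cited decomposition of $\mathbf{S}$ and the unitarity of $\mathbf{V}$.
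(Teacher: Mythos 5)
Your proposal is correct and follows essentially the same route as the paper's proof: lifting the quartic form to $\widetilde{\mathbf{s}} = \mathrm{vec}(\mathbf{s}\mathbf{s}^H)$ via the vectorization identity, invoking the cited decomposition $\mathbf{S} = \mathbf{I}_{N^2} + \mathbf{S}_1 + \mathbf{S}_2$, and collapsing each of the three terms using the unitarity of $\mathbf{V}$. Your identification of $\mathbf{S}_2$ as the rank-one outer product $\mathbf{g}\mathbf{g}^T$ with $\mathbf{g} = \mathrm{vec}(\mathbf{I}_N)$, so that each factor reduces to $\mathbf{v}_n^H \mathbf{v}_{\langle n-q\rangle_N} = \delta_{q,0}$, is precisely the factorization the paper carries out in its third term \eqref{term3}.
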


\begin{corollary}
    The average squared zero-Doppler slice ($k \ne 0$, $q=0$) of the DP-AF is
    \begin{equation}\label{the average delay cut}
    \setlength{\abovedisplayskip}{4pt}
    \setlength{\belowdisplayskip}{4pt}
     \mathbb{E}(\left| \mathcal{X}(k,0) \right|^2) = N + (\mu_4 - 2) \sum_{i=1}^N \left| \mathbf{u}_i^H  \mathbf{J}_k \mathbf{u}_i\right|^2.
    \end{equation}
    \begin{proof}
        Taking $k \ne 0$ and $q=0$ into \eqref{average sidelobe of AF}, it immediately yields the zero-Doppler slice
\begin{equation}\label{the average delay cut -2}
\begin{aligned}
    &\mathbb{E}(\left| \mathcal{X}(k,0) \right|^2)= N +  \\
    &(\mu_4 - 2)  \sum_{n=1}^N \sum_{m=1}^N \left[ \mathbf{1}^T(\mathbf{v}_{n} \odot \mathbf{v}_n^{*} \odot \mathbf{v}_{m}^{*} \odot \mathbf{v}_m)  e^{-j\frac{2\pi}{N}k(n-m)} \right],
\end{aligned}
\end{equation}
where 
\begin{equation}
    \setlength{\abovedisplayskip}{4pt}
    \setlength{\belowdisplayskip}{4pt}
\begin{aligned} \label{the presentation of vi}
        \mathbf{v}_i = \mathbf{U}^H \mathbf{f}_i^* = 
        \left[ \begin{matrix}
             \mathbf{u}_1^H \mathbf{f}_i^* ,
            \mathbf{u}_2^H  \mathbf{f}_i^* ,
            .
            .
            .,
            \mathbf{u}_N^H \mathbf{f}_i^*     
        \end{matrix}
        \right]^T.
\end{aligned}    
\end{equation}
Plugging \eqref{the presentation of vi} into \eqref{the average delay cut -2}, we can get
\begin{align}\label{the summation of vn and vm -2}
        &\nonumber \sum_{n=1}^N \sum_{m=1}^N \left[ \mathbf{1}^T (\mathbf{v}_{n} \odot \mathbf{v}_n^{*} \odot \mathbf{v}_{m}^{*} \odot \mathbf{v}_m)  e^{-j\frac{2\pi}{N}k(n-m)} \right]\\
        &= \sum_{i=1}^N \left| \mathbf{u}_i^H  \mathbf{J}_k \mathbf{u}_i\right|^2.
\end{align}
Plugging \eqref{the summation of vn and vm -2} into \eqref{the average delay cut -2} results in \eqref{the average delay cut}.
    \end{proof}
\end{corollary}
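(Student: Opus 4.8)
The plan is to specialize the general formula \eqref{average sidelobe of AF} to the slice $q=0$ and then simplify each of its three terms. Setting $q=0$ makes $\langle n-q\rangle_N=n$ and $\langle m-q\rangle_N=m$, so every $\mathbf{v}_{\langle\cdot\rangle_N}$ collapses to an unshifted column and the whole expression reduces to something of the form $N+(\mu_4-2)(\cdots)$, matching the shape of the claim.

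First I would dispatch the two easy terms. The first term $\sum_{n,m}\delta_{m,n}e^{-j\frac{2\pi}{N}k(n-m)}$ keeps only the diagonal $m=n$, where the exponential is unity, and thus contributes exactly $N$. The third term carries the factor $\delta_{q,0}=1$, but it then collapses to $\big|\sum_{n=1}^{N}e^{-j\frac{2\pi}{N}kn}\big|^2$; since $k\neq 0$, this is a full-period geometric sum of DFT characters and vanishes. Hence the only surviving contribution beyond $N$ is the middle term scaled by $(\mu_4-2)$.

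The substance of the argument is this middle term. Writing the inner product elementwise as $\mathbf{1}^T(\mathbf{v}_n\odot\mathbf{v}_n^{*}\odot\mathbf{v}_m^{*}\odot\mathbf{v}_m)=\sum_{i=1}^N|[\mathbf{v}_n]_i|^2\,|[\mathbf{v}_m]_i|^2$ and interchanging the order of summation, the double sum over $n,m$ factors into a squared modulus:
\begin{equation}
\sum_{n=1}^{N}\sum_{m=1}^{N}\sum_{i=1}^{N}|[\mathbf{v}_n]_i|^2\,|[\mathbf{v}_m]_i|^2\,e^{-j\frac{2\pi}{N}k(n-m)}=\sum_{i=1}^{N}\Big|\sum_{n=1}^{N}|[\mathbf{v}_n]_i|^2\,e^{-j\frac{2\pi}{N}kn}\Big|^2.
\end{equation}
Using the representation $[\mathbf{v}_n]_i=\mathbf{u}_i^H\mathbf{f}_n^{*}$ from \eqref{the presentation of vi}, so that $|[\mathbf{v}_n]_i|^2=\mathbf{u}_i^H\mathbf{f}_n^{*}\mathbf{f}_n^{T}\mathbf{u}_i$, the bracketed inner sum becomes $\mathbf{u}_i^H\big(\sum_{n=1}^{N}e^{-j\frac{2\pi}{N}kn}\mathbf{f}_n^{*}\mathbf{f}_n^{T}\big)\mathbf{u}_i$, and everything reduces to identifying the Fourier-weighted rank-one sum in the parentheses.

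The key step, and the one I expect to be the main obstacle, is showing that this matrix equals, up to a unit-modulus phase, the shift matrix of Lemma 1 (or its transpose). Computing its $(a,b)$ entry gives $\frac{1}{N}e^{-j\frac{2\pi}{N}k}\sum_{n}e^{j\frac{2\pi}{N}(n-1)(b-a-k)}$, and orthogonality of DFT characters forces this to be nonzero only when $b-a\equiv k\pmod N$, which is exactly the transposed nonzero pattern of $\mathbf{J}_k$. The leftover scalar $e^{-j\frac{2\pi}{N}k}$ has modulus one and is annihilated by $|\cdot|^2$, so the inner squared modulus equals $|\mathbf{u}_i^H\mathbf{J}_k^{T}\mathbf{u}_i|^2$. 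A short conjugation remark, that a scalar equals its transpose and $\mathbf{J}_k$ is real, gives $\mathbf{u}_i^H\mathbf{J}_k^{T}\mathbf{u}_i=\overline{\mathbf{u}_i^H\mathbf{J}_k\mathbf{u}_i}$, hence $|\mathbf{u}_i^H\mathbf{J}_k^{T}\mathbf{u}_i|^2=|\mathbf{u}_i^H\mathbf{J}_k\mathbf{u}_i|^2$. Summing over $i$ recovers $\sum_{i=1}^{N}|\mathbf{u}_i^H\mathbf{J}_k\mathbf{u}_i|^2$, and combining this with the $N$ from the first term yields \eqref{the average delay cut}.
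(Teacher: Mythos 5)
Your proposal is correct and follows essentially the same route as the paper: specialize Proposition 1 to $q=0$, $k\neq 0$, use the representation $[\mathbf{v}_n]_i=\mathbf{u}_i^H\mathbf{f}_n^*$, and collapse the double sum via DFT orthogonality into $\sum_{i=1}^N|\mathbf{u}_i^H\mathbf{J}_k\mathbf{u}_i|^2$. The only difference is that you make explicit the steps the paper leaves implicit --- the vanishing of the third term as a full-period geometric sum, the factorization into a squared modulus, and the phase/transpose bookkeeping in identifying the Fourier-weighted rank-one sum with the shift matrix --- all of which you handle correctly.
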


\begin{corollary}
    The average squared zero-delay slice ($k = 0$, $q\ne0$) of the DP-AF is
    \begin{equation}\label{the average doppler cut}
        \setlength{\abovedisplayskip}{4pt}
        \setlength{\belowdisplayskip}{4pt}
        \mathbb{E}(\left| \mathcal{X}(0,q) \right|^2) =
    N + (\mu_4 - 2)N  \sum_{i=1}^N \left|  \mathbf{u}_i^H \mathrm{Diag}(\mathbf{f}_{q+1}) \mathbf{u}_i \right|^2.
    \end{equation}
    \begin{proof}
Plugging $k = 0$ and  $q \ne 0$ into \eqref{average sidelobe of AF}, the average squared zero-delay slice may be expressed
\begin{equation}\label{the average doppler cut -2}
\begin{aligned}
    &\mathbb{E}(\left| \mathcal{X}(0,q) \right|^2)= N + \\
    &(\mu_4 - 2)  \sum_{n=1}^N \sum_{m=1}^N \left[\mathbf{1}^T (\mathbf{v}_{\langle n-q\rangle _N} \odot \mathbf{v}_n^{*} \odot \mathbf{v}_{\langle m-q \rangle _N}^{*} \odot \mathbf{v}_m) \right].
\end{aligned}
\end{equation}
According to \eqref{the presentation of vi}, we have
\begin{align}\label{the summation of vn_q and vm_q}
    &\nonumber \mathbf{1}^T (\mathbf{v}_{\langle n-q\rangle _N} \odot \mathbf{v}_n^{*} \odot \mathbf{v}_{\langle m-q \rangle _N}^{*} \odot \mathbf{v}_m) \\
    &= \sum_{i=1}^N \mathbf{u}_i^H \mathbf{f}_{\langle n-q \rangle _N}^*  \mathbf{f}_n^T \mathbf{u}_i (\mathbf{u}_i^H \mathbf{f}_{\langle m-q \rangle _N}^*  \mathbf{f}_m^T \mathbf{u}_i)^*.    
\end{align}
Plugging \eqref{the summation of vn_q and vm_q} into \eqref{the average doppler cut -2}, we can get
\begin{align}
    &\nonumber \sum_{n=1}^N \sum_{m=1}^N \mathbf{1}^T (\mathbf{v}_{\langle n-q \rangle _N} \odot \mathbf{v}_n^{*} \odot \mathbf{v}_{\langle m-q \rangle _N}^{*} \odot \mathbf{v}_m) \\
    &= \sum_{i=1}^N \left|  \mathbf{u}_i^H \left(\sum_{n=1}^N \mathbf{f}_{\langle n-q \rangle _N}^*  \mathbf{f}_n^T \right) \mathbf{u}_i \right|^2 ,
\end{align}
where
\begin{equation}
    \setlength{\abovedisplayskip}{4pt}
    \setlength{\belowdisplayskip}{4pt}
    \sum_{n=1}^N \mathbf{f}_{\langle n-q \rangle_N}^*  \mathbf{f}_n^T = \sqrt{N} \mathrm{Diag}(\mathbf{f}_{q+1}).
\end{equation}
This leads to the average squared zero-delay slice \eqref{the average doppler cut}. 
    \end{proof}
\end{corollary}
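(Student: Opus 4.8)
\section*{Proof proposal for Corollary 2}

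The plan is to specialize the master formula \eqref{average sidelobe of AF} of Proposition 1 to the zero-delay slice $k=0$, $q \ne 0$, and then collapse the resulting double summation into a single sum of squared magnitudes indexed by the basis vectors. First I would set $k=0$, so that every phase factor $e^{-j\frac{2\pi}{N}k(n-m)}$ collapses to unity. This disposes of two of the three terms in \eqref{average sidelobe of AF}: the first (Kronecker-delta) term reduces to $\sum_{n=1}^N \sum_{m=1}^N \delta_{m,n} = N$, while the third term carries the factor $\delta_{q,0}$ and therefore vanishes because $q \ne 0$. Hence only the kurtosis-dependent middle term survives, and the task reduces to evaluating $(\mu_4 - 2)\sum_{n=1}^N \sum_{m=1}^N \mathbf{1}^T(\mathbf{v}_{\langle n-q\rangle_N} \odot \mathbf{v}_n^{*} \odot \mathbf{v}_{\langle m-q\rangle_N}^{*} \odot \mathbf{v}_m)$.

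Next I would expand the Hadamard contraction entrywise through the column structure $\mathbf{v}_n = \mathbf{U}^H \mathbf{f}_n^{*}$ of \eqref{the presentation of vi}, whose $i$-th entry is $\mathbf{u}_i^H \mathbf{f}_n^{*}$. Abbreviating $g_i(n) := \mathbf{u}_i^H \mathbf{f}_{\langle n-q\rangle_N}^{*}\, \mathbf{f}_n^{T}\, \mathbf{u}_i$ and tracking the conjugates carefully, the all-ones contraction becomes $\mathbf{1}^T(\mathbf{v}_{\langle n-q\rangle_N} \odot \mathbf{v}_n^{*} \odot \mathbf{v}_{\langle m-q\rangle_N}^{*} \odot \mathbf{v}_m) = \sum_{i=1}^N g_i(n)\, \overline{g_i(m)}$. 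The crucial observation is that $n$ and $m$ now decouple: interchanging the order of summation yields $\sum_{n=1}^N \sum_{m=1}^N \sum_{i=1}^N g_i(n)\,\overline{g_i(m)} = \sum_{i=1}^N \left| \sum_{n=1}^N g_i(n) \right|^2$, so the double sum becomes a sum of squared magnitudes, one per basis vector $\mathbf{u}_i$.

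It then remains to evaluate the inner sum. Writing $\sum_{n=1}^N g_i(n) = \mathbf{u}_i^H \left( \sum_{n=1}^N \mathbf{f}_{\langle n-q\rangle_N}^{*} \mathbf{f}_n^{T} \right) \mathbf{u}_i$, I would invoke the DFT shift identity $\sum_{n=1}^N \mathbf{f}_{\langle n-q\rangle_N}^{*} \mathbf{f}_n^{T} = \sqrt{N}\,\mathrm{Diag}(\mathbf{f}_{q+1})$, which follows from the orthonormality of the DFT columns together with the fact that a cyclic shift of the frequency index corresponds to pointwise modulation. Substituting gives $\sum_{n=1}^N g_i(n) = \sqrt{N}\, \mathbf{u}_i^H \mathrm{Diag}(\mathbf{f}_{q+1}) \mathbf{u}_i$, hence $\left|\sum_n g_i(n)\right|^2 = N\,\left|\mathbf{u}_i^H \mathrm{Diag}(\mathbf{f}_{q+1})\mathbf{u}_i\right|^2$. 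Summing over $i$, multiplying by $(\mu_4 - 2)$, and adding the leading $N$ reproduces exactly the claimed expression \eqref{the average doppler cut}.

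I expect the main obstacle to lie in the bookkeeping of the separability step: correctly identifying the four Hadamard factors as $g_i(n)\,\overline{g_i(m)}$ so that the $n$- and $m$-summations factor into a genuine magnitude square rather than a mixed bilinear form, and then rigorously establishing the DFT shift identity with the modulo-$N$ index $\langle n-q\rangle_N$, where the off-by-one effects of the $1$-indexing must be tracked with care. Once these two pieces are in place, the remainder is a direct substitution.
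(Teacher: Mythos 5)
Your proposal is correct and follows essentially the same route as the paper's own proof: specializing \eqref{average sidelobe of AF} to $k=0$, $q\ne 0$, expanding the Hadamard contraction via \eqref{the presentation of vi} into the product $g_i(n)\,\overline{g_i(m)}$ so the double sum decouples into $\sum_{i=1}^N \bigl| \mathbf{u}_i^H \bigl(\sum_{n=1}^N \mathbf{f}_{\langle n-q\rangle_N}^{*}\mathbf{f}_n^{T}\bigr)\mathbf{u}_i \bigr|^2$, and then applying the identity $\sum_{n=1}^N \mathbf{f}_{\langle n-q\rangle_N}^{*}\mathbf{f}_n^{T} = \sqrt{N}\,\mathrm{Diag}(\mathbf{f}_{q+1})$. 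The bookkeeping of conjugates you flag as the main risk is handled exactly as in the paper's equation \eqref{the summation of vn_q and vm_q}, so no gap remains.
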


\begin{corollary}
    When $k\ne 0$ and $q \ne 0$, the average sidelobe level of the DP-AF is
    \begin{equation}\label{the average sidelobe}
    \setlength{\abovedisplayskip}{4pt}
    \setlength{\belowdisplayskip}{4pt}
        \mathbb{E}(\left| \mathcal{X}(k,q) \right|^2)  = N + (\mu_4 - 2)N  \sum_{i=1}^N \left| \mathbf{u}_i^H \mathrm{Diag}(\mathbf{f}_{q+1}) \mathbf{J}_k \mathbf{u}_i \right|^2 .
    \end{equation}
    \begin{proof}
In the case of $k \ne 0$ and $q \ne 0$, \eqref{average sidelobe of AF} may be recast as
\begin{equation}
\begin{aligned}
    &\mathbb{E}(\left| \mathcal{X}(k,q) \right|^2)  = N + (\mu_4 - 2) \times\\
    & \sum_{n=1}^N \sum_{m=1}^N \left[ \mathbf{1}^T(\mathbf{v}_{\langle n-q \rangle _N} \odot \mathbf{v}_n^{*} \odot \mathbf{v}_{\langle m-q \rangle _N}^{*} \odot \mathbf{v}_m) \right]  e^{-j\frac{2\pi}{N}k(n-m)}.
\end{aligned}
\end{equation}
By relying on \eqref{the summation of vn_q and vm_q}, we arrive at
\begin{align}
    &\nonumber \sum_{n=1}^N \sum_{m=1}^N \left[ \mathbf{1}^T (\mathbf{v}_{\langle n-q \rangle _N} \odot \mathbf{v}_n^{*} \odot \mathbf{v}_{\langle m-q \rangle _N}^{*} \odot \mathbf{v}_m) \right]  e^{-j\frac{2\pi}{N}k(n-m)} \\
    &= \sum_{i=1}^N \left|  \mathbf{u}_i^H \left(\sum_{n=1}^N \mathbf{f}_{\langle n-q \rangle _N}^*  \mathbf{f}_n^T  e^{-j\frac{2\pi}{N}k(n-1)} \right) \mathbf{u}_i\right|^2.
\end{align}
where
\begin{equation}
    \setlength{\abovedisplayskip}{4pt}
    \setlength{\belowdisplayskip}{4pt}
    \sum_{n=1}^N \mathbf{f}_{\langle n-q \rangle _N}^*  \mathbf{f}_n^T  e^{-j\frac{2\pi}{N}k(n-1)} = \sqrt{N} \mathrm{Diag}(\mathbf{f}_{q+1}) \mathbf{J}_k,
\end{equation}
leading to the average sidelobe level of the DP-AF in \eqref{the average sidelobe}. 
    \end{proof}
\end{corollary}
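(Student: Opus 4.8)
The plan is to specialize the general average-squared DP-AF formula \eqref{average sidelobe of AF} to the regime $k\ne 0,\ q\ne 0$ and then collapse the surviving double sum into a sum of squared magnitudes. First I would dispose of the two boundary terms. In $\sum_{n}\sum_{m}\delta_{m,n}\,e^{-j\frac{2\pi}{N}k(n-m)}$ the Kronecker delta forces $m=n$, where the phase is unity, so this term reduces to $N$; and $\sum_{n}\sum_{m}\delta_{q,0}\,e^{-j\frac{2\pi}{N}k(n-m)}$ vanishes identically because $q\ne 0$ gives $\delta_{q,0}=0$. This isolates the $(\mu_4-2)$-weighted double sum as the only substantive quantity left to evaluate.

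For that quantity I would reuse the per-entry identity \eqref{the summation of vn_q and vm_q}, which already writes $\mathbf{1}^T(\mathbf{v}_{\langle n-q\rangle_N}\odot\mathbf{v}_n^{*}\odot\mathbf{v}_{\langle m-q\rangle_N}^{*}\odot\mathbf{v}_m)$ as $\sum_{i=1}^N \mathbf{u}_i^H\mathbf{f}_{\langle n-q\rangle_N}^{*}\mathbf{f}_n^T\mathbf{u}_i\,(\mathbf{u}_i^H\mathbf{f}_{\langle m-q\rangle_N}^{*}\mathbf{f}_m^T\mathbf{u}_i)^{*}$. After substituting it and interchanging summation to bring $\sum_i$ outermost, I would split the phase as $e^{-j\frac{2\pi}{N}k(n-m)}=e^{-j\frac{2\pi}{N}k(n-1)}\bigl(e^{-j\frac{2\pi}{N}k(m-1)}\bigr)^{*}$. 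The $(n,m)$ double sum then factorizes into a sum over $n$ times the complex conjugate of the identical sum over $m$, turning each $i$-term into the squared modulus $\bigl|\sum_{n=1}^N \mathbf{u}_i^H\mathbf{f}_{\langle n-q\rangle_N}^{*}\mathbf{f}_n^T\mathbf{u}_i\,e^{-j\frac{2\pi}{N}k(n-1)}\bigr|^2$.

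The crux is then the matrix identity
\begin{equation}
    \sum_{n=1}^N \mathbf{f}_{\langle n-q\rangle_N}^{*}\,\mathbf{f}_n^T\,e^{-j\frac{2\pi}{N}k(n-1)} = \sqrt{N}\,\mathrm{Diag}(\mathbf{f}_{q+1})\,\mathbf{J}_k,
\end{equation}
which threads the cyclic shift $\mathbf{J}_k$ through the $k=0$ relation used for the zero-delay slice. I would prove it entrywise: the $(a,b)$ component of the left-hand side is a geometric sum over $n$ that is nonzero only when $a-b\equiv k \pmod N$, and on that support it collapses to the $q$-dependent phase matching $\sqrt{N}\,[\mathbf{f}_{q+1}]_a$; this is precisely the nonzero pattern of $\mathbf{J}_k$ scaled by the diagonal $\sqrt{N}\,\mathrm{Diag}(\mathbf{f}_{q+1})$, and it dovetails with the Fourier decomposition of $\mathbf{J}_k$ in Lemma 1. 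Granting the identity, the inner sum equals $\sqrt{N}\,\mathbf{u}_i^H\mathrm{Diag}(\mathbf{f}_{q+1})\mathbf{J}_k\mathbf{u}_i$, so each $i$-term is $N\,|\mathbf{u}_i^H\mathrm{Diag}(\mathbf{f}_{q+1})\mathbf{J}_k\mathbf{u}_i|^2$; summing over $i$, restoring $(\mu_4-2)$, and adding back the boundary $N$ yields \eqref{the average sidelobe}. I expect this shifted DFT identity to be the only genuine obstacle, since the remaining steps—the delta collapse, the summation interchange, and the phase splitting—merely repeat the bookkeeping of the zero-Doppler and zero-delay slices, making this case their common generalization with $\mathbf{J}_k$ inserted.
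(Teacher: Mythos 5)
Your proposal is correct and follows essentially the same route as the paper's proof: specialize \eqref{average sidelobe of AF} to $k\ne 0$, $q\ne 0$ (the delta terms giving $N$ and $0$), reuse \eqref{the summation of vn_q and vm_q} to factor the double sum into squared moduli, and invoke the identity $\sum_{n=1}^N \mathbf{f}_{\langle n-q \rangle _N}^*\,\mathbf{f}_n^T\,e^{-j\frac{2\pi}{N}k(n-1)} = \sqrt{N}\,\mathrm{Diag}(\mathbf{f}_{q+1})\,\mathbf{J}_k$. The only difference is that you sketch an entrywise verification of that identity, which the paper states without proof --- a welcome addition, not a departure.
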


According to the Cauchy-Schwarz inequality, we have
\begin{align}
        &\nonumber\left| \mathbf{u}_i^H  \mathbf{J}_k \mathbf{u}_i\right|^2 
         \leq   1, \quad
        \left|  \mathbf{u}_i^H \mathrm{Diag}(\mathbf{f}_{q+1}) \mathbf{u}_i \right|^2 
        \leq  \frac{1}{N}, \\
        &\left| \mathbf{u}_i^H \mathrm{Diag}(\mathbf{f}_{q+1}) \mathbf{J}_k \mathbf{u}_i \right|^2 
    \leq  \frac{1}{N}.\label{the cauchy}
\end{align}
Combining \eqref{the average delay cut}, \eqref{the average doppler cut}, \eqref{the average sidelobe}, and \eqref{the cauchy}, it becomes apparent that for QAM and PSK constellations whose kurtosis values are less than 2, the corresponding average sidelobe level of their DP-AFs is between $(\mu_4 - 1)N$ and $N$.

\subsection{EISL}

\begin{thm} [Invariance of EISL] For all constellations and modulation schemes, the normalized EISL is a constant value $N-1$.
\begin{equation}
    \setlength{\abovedisplayskip}{5pt}
    \setlength{\belowdisplayskip}{5pt}
    \frac{\sum_{k=0}^{N-1} \sum_{q=0}^{N-1}\mathbb{E}(\left| \mathcal{X}(k,q) \right|^2) - \mathbb{E}(\left| \mathcal{X}(0,0) \right|^2)}{\mathbb{E}(\left| \mathcal{X}(0,0) \right|^2)} = N -1.
\end{equation}
\end{thm}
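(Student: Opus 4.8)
The plan is to reduce the theorem to a single aggregate identity: it suffices to show that the total sum of all squared DP-AF values equals exactly $N$ times the mainlobe, namely $\sum_{k=0}^{N-1}\sum_{q=0}^{N-1}\mathbb{E}(|\mathcal{X}(k,q)|^2)=N\cdot\mathbb{E}(|\mathcal{X}(0,0)|^2)$. Once this is in hand, substituting it into the numerator of the normalized EISL gives $(N\cdot\mathbb{E}(|\mathcal{X}(0,0)|^2)-\mathbb{E}(|\mathcal{X}(0,0)|^2))/\mathbb{E}(|\mathcal{X}(0,0)|^2)=N-1$ immediately, with the mainlobe value canceling. To prove the aggregate identity I would start from the three-term closed form for $\mathbb{E}(|\mathcal{X}(k,q)|^2)$ furnished by Proposition 1 and sum each term separately over the full grid $k,q=0,\dots,N-1$.

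The first and third terms are routine. In the first term $\sum_{n,m}\delta_{m,n}e^{-j\frac{2\pi}{N}k(n-m)}$ the Kronecker delta forces $m=n$, so every phase factor is unity and the term equals $N$ for each $(k,q)$; summing over all $N^2$ cells contributes $N^3$. In the third term $\delta_{q,0}\sum_{n,m}e^{-j\frac{2\pi}{N}k(n-m)}$ the double sum factors as $|\sum_{n}e^{-j\frac{2\pi}{N}kn}|^2$, which equals $N^2$ when $k=0$ and vanishes otherwise by the geometric-series identity; together with $\delta_{q,0}$, only the cell $(k,q)=(0,0)$ survives and contributes $N^2$.

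The main work lies in the kurtosis-dependent second term, which I would evaluate by summing over $k$ and then over $q$. Summing over $k$ first invokes DFT orthogonality $\sum_{k=0}^{N-1}e^{-j\frac{2\pi}{N}k(n-m)}=N\delta_{n,m}$, collapsing the double index sum to its diagonal and leaving $N(\mu_4-2)\sum_{n}\mathbf{1}^T(|\mathbf{v}_{\langle n-q\rangle_N}|^2\odot|\mathbf{v}_n|^2)$. The $q$-summation is the crux: as $q$ runs through a complete residue system, $\langle n-q\rangle_N$ permutes all column indices, so $\sum_{q=0}^{N-1}|\mathbf{v}_{\langle n-q\rangle_N}|^2=\sum_{j=1}^N|\mathbf{v}_j|^2$. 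Because $\mathbf{V}=\mathbf{U}^H\mathbf{F}_N^H$ is unitary, its rows have unit norm, so this column sum equals the all-ones vector $\mathbf{1}_N$ \emph{regardless of} $\mathbf{U}$; the expression then reduces to $\sum_n\|\mathbf{v}_n\|_2^2=N$ by unit column norms, giving $(\mu_4-2)N^2$. I expect this step to be both the hardest and the conceptual heart of the proof, since it is exactly this $q$-summation that erases all dependence on the modulation basis and on the constellation beyond its scalar kurtosis.

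Adding the three contributions yields $N^3+(\mu_4-2)N^2+N^2=N^3+(\mu_4-1)N^2=N[N^2+(\mu_4-1)N]$, which is precisely $N\cdot\mathbb{E}(|\mathcal{X}(0,0)|^2)$ upon recalling the mainlobe value $\mathbb{E}(|\mathcal{X}(0,0)|^2)=N^2+(\mu_4-1)N$ from \eqref{the average mainlobe of AF}. Dividing by the mainlobe as in the theorem statement then returns the constant $N-1$, establishing the claimed invariance across all constellations and modulation schemes.
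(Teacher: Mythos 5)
Your proof is correct, and its skeleton coincides with the paper's: both start from the three-term decomposition of Proposition 1, obtain $N^3$ for the Kronecker-delta term and $N^2$ for the $\delta_{q,0}$ term, show that the kurtosis-dependent term contributes $(\mu_4-2)N^2$, and conclude from $\sum_{k,q}\mathbb{E}(|\mathcal{X}(k,q)|^2)=N\,\mathbb{E}(|\mathcal{X}(0,0)|^2)$ that the normalized EISL is $N-1$. Where you genuinely diverge is in the crux, namely the kurtosis term. The paper first carries out the $(n,m)$ sums at fixed $(k,q)$, recasting them as the Fourier-domain quadratic form $N\sum_{i=1}^N |\mathbf{f}_{k+1}^H \mathrm{Diag}(\mathbf{F}_N\mathbf{u}_i)^H \mathbf{J}_{q} \mathbf{F}_N\mathbf{u}_i|^2$ in \eqref{the summation of m and n of part2}, and then sums over $(k,q)$ to reach \eqref{Part2} --- a step it asserts without detail, implicitly resting on Parseval in $k$ and cyclic-shift invariance in $q$. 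You instead sum over $k$ first, so that DFT orthogonality collapses $(n,m)$ onto the diagonal $m=n$, and then dispose of the $q$-sum by the permutation argument $\sum_{q=0}^{N-1}|\mathbf{v}_{\langle n-q\rangle_N}|^2=\sum_{j}|\mathbf{v}_j|^2=\mathbf{1}_N$, which holds because the rows of the unitary matrix $\mathbf{V}=\mathbf{U}^H\mathbf{F}_N^H$ have unit norm, finishing with $\sum_n\|\mathbf{v}_n\|_2^2=N$. Your route is more elementary --- it never needs the identity $\sum_{n}\mathbf{f}_{\langle n-q\rangle_N}^{*}\mathbf{f}_n^{T}e^{-j\frac{2\pi}{N}k(n-1)}=\sqrt{N}\,\mathrm{Diag}(\mathbf{f}_{q+1})\mathbf{J}_k$ that underlies the paper's quadratic form --- and it actually supplies the detail the paper omits, making explicit that basis-invariance stems from unitarity of $\mathbf{V}$ alone. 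What the paper's formulation buys in exchange is structural: its per-$(k,q)$ quadratic form is exactly the object appearing in Corollaries 2 and 3, so its EISL computation doubles as a consistency check on the individual sidelobe-level expressions, whereas your aggregate argument bypasses that connection.
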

\begin{proof}

According to \eqref{average sidelobe of AF}, it's clear that the summation of EISL and average mainlobe can be divided into three terms.

The first and third summation terms respectively are
\begin{equation} \label{Part1}
    \setlength{\abovedisplayskip}{4pt}
    \setlength{\belowdisplayskip}{4pt}
    \sum_{k=0}^{N-1} \sum_{q=0}^{N-1}  \sum_{n=1}^N \sum_{m=1}^N \delta_{m,n} e^{-j \frac{2\pi}{N} k (n-m)} = N^3.
\end{equation}
\begin{equation}\label{Part3}
    \setlength{\abovedisplayskip}{4pt}
    \setlength{\belowdisplayskip}{4pt}
    \sum_{k=0}^{N-1} \sum_{q=0}^{N-1}  \sum_{n=1}^N \sum_{m=1}^N \delta_{q} e^{-j \frac{2\pi}{N} k (n-m)}
    = N^2.
\end{equation}
The second summation term is more complex. We first consider the summations of $m$ and $n$, as shown in \eqref{the summation of m and n of part2}. 
    \begin{align}\label{the summation of m and n of part2}
        &\nonumber \sum_{n=1}^N \sum_{m=1}^N \mathbf{1}^T(\mathbf{v}_{\langle n-q\rangle_N} \odot \mathbf{v}_n^{*} \odot \mathbf{v}_{\langle m-q\rangle_N}^{*} \odot \mathbf{v}_m)  e^{-j \frac{2\pi}{N} k (n-m)} \\
        & = N   \sum_{i=1}^N \left| \mathbf{f}_{k+1}^H \mathrm{Diag}(\mathbf{F}_N \mathbf{u}_i)^H \mathbf{J}_{q} \mathbf{F}_N \mathbf{u}_i \right|^2.
    \end{align}  
Using \eqref{the summation of m and n of part2}, the second summation term is shown in \eqref{Part2}.
\begin{align}\label{Part2}
        &\nonumber \sum_{k=0}^{N-1} \sum_{q=0}^{N-1} \sum_{n=1}^N \sum_{m=1}^N (\mu_4 -2) \mathbf{1}^T \\
        &(\mathbf{v}_{\langle n-q\rangle_N} \odot \mathbf{v}_n^{*} \odot \mathbf{v}_{\langle m-q\rangle_N}^{*} \odot \mathbf{v}_m) e^{-j \frac{2\pi}{N} k (n-m)} = (\mu_4 - 2)N^2.
\end{align}
Combining \eqref{Part1}, \eqref{Part3}, and \eqref{Part2}, we arrive at
\begin{equation}
    \setlength{\abovedisplayskip}{4pt}
    \setlength{\belowdisplayskip}{4pt}
    \sum_{k=0}^{N-1} \sum_{q=0}^{N-1}  \mathbb{E}(\left| \mathcal{X}(k,q) \right|^2) = N^3 + (\mu_4 -1)N^2.
\end{equation}
This completes the proof.
\end{proof}

According to Theorem 1, when the EISL of the DF-AF is used as the sensing metric, communication-centric ISAC signals formulated by any constellation and modulation basis yield identical ambiguity performance. Hence, no optimal sensing waveform exists in the sense of minimizing the EISL.

\section{Simulations}
In this section, we further validate the theoretical analysis and key conclusions through numerical simulations. All results are obtained by averaging over 1000 independent random realizations.

We begin by statistically characterizing the AF of random communication signals. In Fig.~\ref{fig: the 2D AF performances compare}, the average sidelobe levels of 16-QAM symbols under different communication waveforms are compared. It can be observed that the simulation results closely match the theoretical predictions. The consistency between simulation and theory confirms the validity of the analytical derivations. 

Then, the constant behavior of the normalized EISL is validated in Fig. \ref{fig:Traditional AF EISL Compare different N 16QAM and 16PSK}. As shown, the sum of the normalized EISL and normalized mainlobe levels remains equal to $N$ for all five considered waveform types under both 16-QAM and 16-PSK constellations. These results empirically verify the theoretical claim made in Theorem 1.
\begin{figure}[ht]
    \centering
    \vspace{-0.3cm}
        \subfigure[OFDM]{
        \includegraphics[width = 0.45\columnwidth]{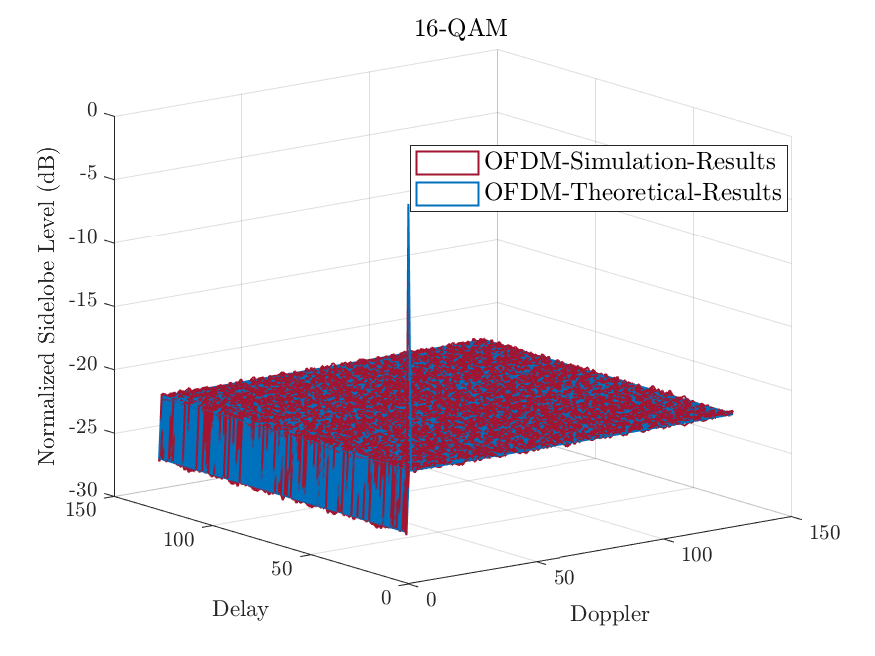} 
        \label{fig: AF of 16QAM under OFDM compare}    
    }
    \subfigure[SC]{
        \includegraphics[width=0.45\columnwidth]{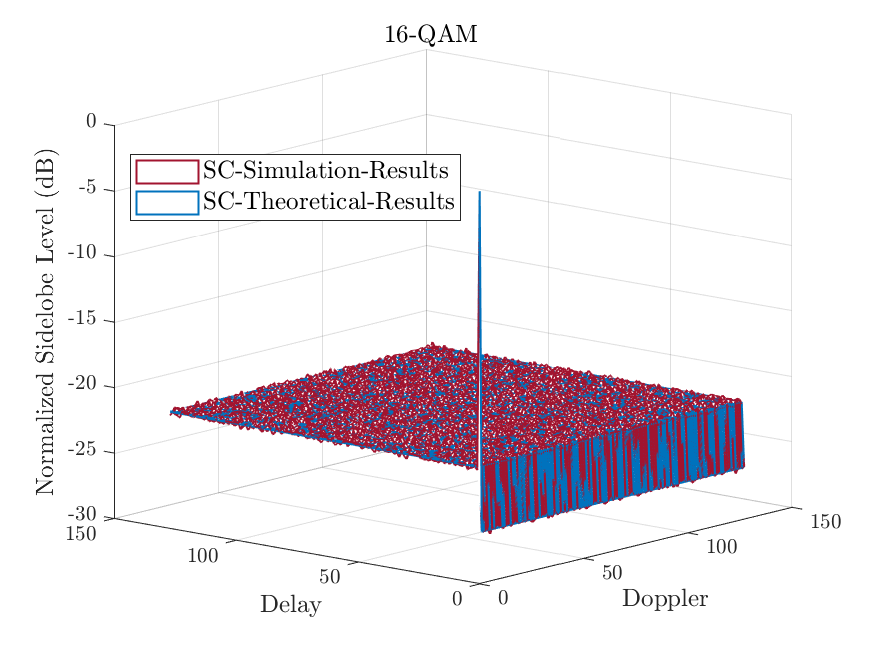}
        \label{fig: AF of 16QAM under SC compare}    
    }
    \vspace{-0.3cm}
    \caption{The DP-AF performances of 16-QAM  under OFDM and SC signals with $N = 128$.}
    \vspace{-0.3cm}
    \label{fig: the 2D AF performances compare}
\end{figure}

\begin{figure}[ht]
    \centering
    \vspace{-0.3cm}
    \subfigure[16-QAM]{
    \includegraphics[width=0.45\columnwidth]{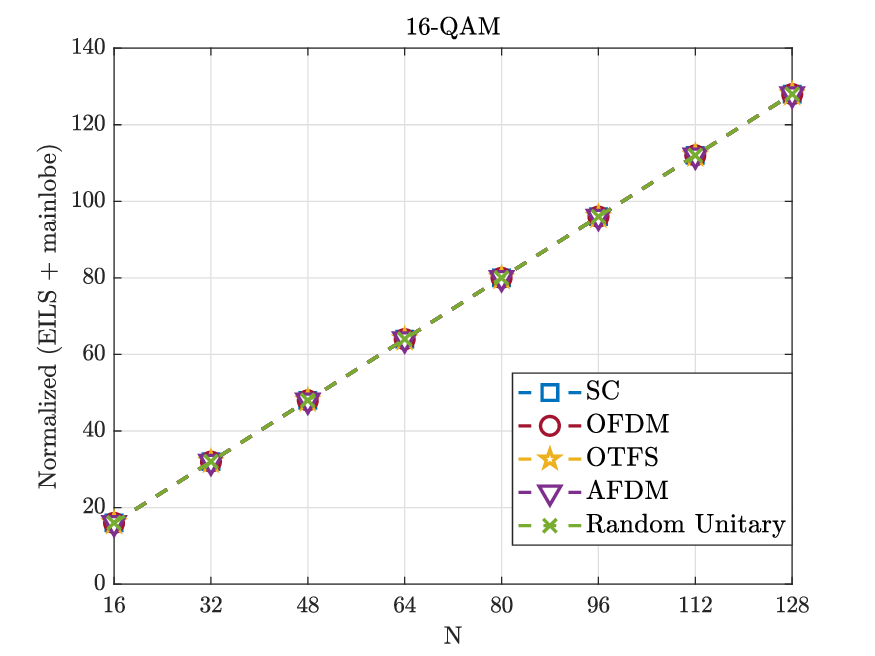}
    \label{fig:Traditional AF EISL Compare different N 16QAM}}
    \subfigure[16-PSK]{
    \includegraphics[width=0.45\columnwidth]{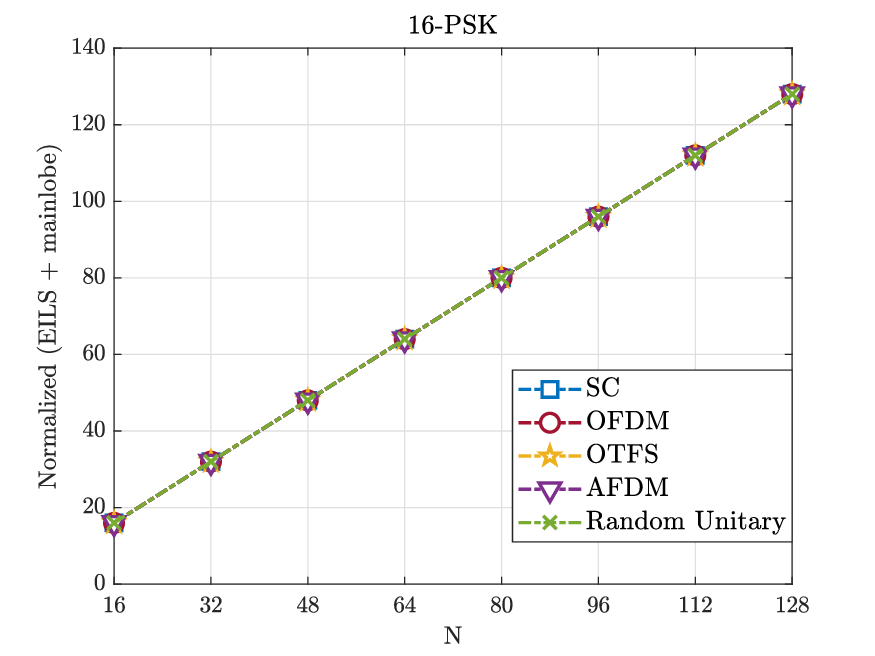}
    \label{fig:Traditional AF EISL Compare different N 16PSK}
    }
    \vspace{-0.3cm}
    \caption{The normalized EISL performances of 16-QAM and 16-PSK compared under different communication signals with varying $N$.}
    \label{fig:Traditional AF EISL Compare different N 16QAM and 16PSK}
\end{figure}

\section{Conclusions}
This paper analyzed the DP-AF of random communication-centric ISAC waveforms. We derived closed-form expressions for the average sidelobe level under arbitrary constellations and proved that the normalized EISL of the DP-AF is invariant across all constellations and modulation schemes.

\vfill\pagebreak


\bibliographystyle{IEEEbib}
\balance
\bibliography{refs}

\end{document}